\documentclass{article}
\usepackage{amsfonts}

\usepackage{graphicx}
\usepackage{amsmath}
\usepackage{mathrsfs}

\newtheorem{theorem}{Theorem}

\newtheorem{corollary}[theorem]{Corollary}

\newtheorem{definition}[theorem]{Definition}

\newtheorem{proposition}[theorem]{Proposition}
\newtheorem{remark}[theorem]{Remark}

\newenvironment{proof}[1][Proof]{\textbf{#1.} }{\ \rule{0.5em}{0.5em}}

\begin{document}

\title{Quantum Anomalies and Logarithmic Derivatives}
\author{J.E. Gough$^{(a)}$, T. S. Ratiu$^{(b)}$, and O. G. Smolyanov$^{(c)}$\\
\quad \\
\noindent a) Institute of Mathematics, Physics, and Computer Sciences,\\
Aberystwyth University, Great Britain\\
email: \texttt{jug@aber.ac.uk}\\
b) Section de Mathématiques and Bernoulli Center,\\
 \'{E}cole
Polytechnique F\'{e}d\'{e}rale de Lausanne,\\
 CH-1015 Switzerland\\
email: \texttt{tudor.ratiu@epfl.ch}\\
\indent c) Mechanics and Mathematics Faculty,\\
Moscow State University, Moscow, 119991 Russia\\
email: \texttt{smolyanov@yandex.ru}}
\date{}
\maketitle

\begin{abstract}
This papers deals with connections between quantum anomalies and
transformations of Feynman pseudo-measures. Mathematical objects
related to the notion of the volume element in an
infinite-dimensional space considered in the physics literature
[1] are considered  and disagreement in the related literature
regarding the origin of quantum anomalies is explained.
\end{abstract}

\section{Introduction}

A quantum anomaly is the violation of a symmetry (see [4]) with respect to
some group of transformations under quantization. That is, a situations where
a classical Hamiltonian system exhibits an invariance with respect to some transformations
however the same is not true for its quantization (see [2]).

There is some degree of disagreement in the related literature regarding the origin
of quantum anomalies. An explanation is given by in the 2004 edition of Fujikawa and Suzuki's
well-known book [1], however,
in the 2006 monograph of Cartier and DeWitt-Morette [2, page 352], it is claimed that this
description as to why quantum anomalies occur given is
incorrect. The second edition of [1] in 2013 however did not address this criticism of
their book. Our contribution in this paper is to analyze the problem from the point of
view of Feynman pseudo-measures (defined in the next section), and conclude
that description of the emergence of quantum anomalies given in [1] is essential correct.

We shall use the fact that the transformations of a functional Feynman integral
(i.e., an integral with respect to a Feynman pseudo-measure)
are determined by transformations of two distinct objects:

\begin{enumerate}
    \item
 The first of these
is the product of a Feynman pseudo-measure with a certain function integrable
with respect to this pseudo-measure, which is, in turn, the product of the
exponential of a part of the classical action and the initial condition. The
exponential of the other part of the action determines the Feynman
pseudo-measure. If the action and the initial condition are invariant with
respect to some transformation, then this object is invariant with respect
to this transformation as well.
\item
The second object is a determinant, which plays the role of a Jacobian; this
determinant may differ from unity even in the case where the action and the
initial conditions are invariant with respect to phase transformations; of
course, in this case, the Feynman integral is non-invariant as well.
\end{enumerate}

What is said above agrees in spirit with the viewpoint of [1]. The
authors in [2] proposed however to compensate for this determinant
by multiplying the measure with respect to which the integration
is performed\footnote{Of course this counterpart of the classical
Lebesgue measure  does not exist in the infinite-dimensional case
according to the well-known theorem of Weil, but it is not
important in this context because the counterpart can be
considered as a pseudomeasure.} by an additional factor, which, of
course, is equivalent to multiplying the integrand by the same
factor.

In this paper, we consider families of
transformations of the domain of a (pseudo)measure depending on a real
parameter and show that such a compensation is impossible; for this purpose,
we use differentiation with respect to this parameter. The paper is
organized as follows. First, we recall two basic definitions of the
differentiability of a measure and, more generally, a pseudo-measure
(distribution); then, we give explicit expressions for the logarithmic
derivatives of measures and pseudo-measures with respect to transformations
of the space on which they are defined. The application of these expressions%
\footnote{%
Among other things, these formulas lead to infinite-dimensional versions of
both the First and Second Noether Theorems} makes it possible to obtain a mathematically
correct version of results of [1] concerning quantum anomalies. After this,
we discuss the approach to explaining the same anomalies proposed in [2]. We
also discuss mathematical objects related to the notion of the volume
element in an infinite-dimensional space considered in physics literature
(including [2]). We concentrate on the algebraic structure of problems,
leaving aside most assumptions of analytical character.

\section{Differentiation of Measures and Distributions}

In this section we recall some definitions, conventions, and
results about differentiable measures and distributions on
infinite-dimensional spaces in a form convenient for our purposes.

\begin{definition}
Given a locally convex space (LCS) $E$, we denote the $\sigma $-algebra of
Borel subsets of $E$ by $\mathcal{B}_{E}$, and the vector space of countably
additive (complex) measures on $E$ by $\mathfrak{M}_{E}$ . We say that the
vector space $C$ of bounded Borel functions on $E$ determines a norm if, for
any measure in $\mathfrak{M}_{E}$, its total variation $\mu \in \mathfrak{M}%
_{E}$ satisfies the condition $\Vert \mu \Vert _{1}=\sup \bigg\{\int ud\mu
\,:\,u\in C,\,\Vert u\Vert _{\infty }\leq 1\bigg\}$, where $\Vert u\Vert
_{\infty }=\sup \{|u(x)|:x\in E\}$.
\end{definition}

A \textit{Hilbert subspace} of an LCS $E$ is defined as a vector
subspace $H$ of $E$ endowed with the structure of a Hilbert space
such that the topology induced on $H$ by the topology of $E$ is
weaker than the topology generated by the Hilbert norm. We now
define the notion of smooth maps, of  a LCS, along a Hilbert
subspace.

\begin{definition}
A mapping $F$ of an LCS $E$ to an LCS $G$ is said to be smooth
along a Hilbert subspace $H$ of $E$ (or $H$-smooth) if it is
infinitely differentiable along $H$ and both the mapping $F$ and
all of its derivatives (along $H$) are continuous on $E$, provided
that the spaces in which the derivatives take values are endowed
with the topologies of uniform convergence on compact subsets of
$H$.
\end{definition}

The class of vector fields on a LCS is introduced next.

\begin{definition}
A vector field on an LCS $E$ is a mapping $h:E\mapsto E$; we denote the set
of vector fields on $E$ by $Vect(E)$. The derivative along a vector field $%
h\in Vect(E)$ of a function $f$ defined on $E$ is the function on $E$
denoted by $f^{\prime }h$ and defined by
\begin{equation*}
(f^{\prime }h)(x):=f^{\prime }(x)\,h(x),\qquad \forall x\in E,
\end{equation*}
where $f^{\prime }(x)$ is the G\^{a}teaux derivative of $f$ at the point $x$.
\end{definition}

Let $\epsilon >0$, and let $S$ be a mapping of the interval $(-\epsilon
,\epsilon )$ to the set of $\mathcal{B}_{E}$-measurable self-mappings of $E$
for which $S(0)=id$; suppose that $\tau $ is a topology on $E$ compatible
with the vector space structure. A measure $\nu \in E$ is said to be $\tau $%
-differentiable along $S$ if the function
\begin{eqnarray}
f &:&(-\epsilon ,\epsilon )\mapsto (\mathfrak{M}_{E},\tau ),  \notag \\
&:&t\mapsto S(t)_{\ast }\nu :=\nu (S(t)^{-1}),  \label{eq:f}
\end{eqnarray}
is differentiable at $t=0$ (the symbol $S(t)_{\ast }\nu $ denotes the image
of $\nu $ under the mapping $S(t)$); in this case, we denote $f^{\prime }(0)$
by $v_{S}^{\prime }$ and call it the \textit{derivative of the measure} $\nu
$ along $S$. If, in addition, $f^{\prime }\ll f(0)$ then we may furthermore
define the logarithmic derivatives for the density.

\begin{definition}
Suppose that the mapping (\ref{eq:f}) is differentiable at $t=0$, and that
the measure $f^{\prime }(0)$ is absolutely continuous with respect to $f(0)$%
, then its density with respect to the measure $f(0)$ is called the $\tau $%
-logarithmic derivative of the measure $\nu $ along $S$ and denoted by $%
\beta _{S}^{v}$.
\end{definition}

If $k\in E$ and $S(t)(x):=x-tk$, then a measure $\nu $ which is $\tau $%
-differentiable along $S$ is said to be $\tau $-differentiable along $k$,
and $\nu ^{\prime }k$ is defined by $\nu ^{\prime }k=v_{S}^{\prime }$; the $%
\tau $-logarithmic derivative of the measure $\nu $ along $S$ is called the $%
\tau $-logarithmic derivative of $\nu $ along $k$ and denoted by $\beta
^{\nu }(k,\cdot )$. The $\tau $-differentiability of a measure along a
vector field $h$ and its $\tau $-logarithmic derivative along $h$ (denoted
by $\beta _{h}^{\nu }$) are defined in a similar way: we set
\begin{equation*}
S(t)(x):=x-t\, h(x).
\end{equation*}

If a measure $\nu $ is $\tau $-differentiable along each $k\in E$, then it
can be shown that the mapping $E\ni k\mapsto \nu ^{\prime }k$ is linear; the
corresponding vector-valued measure
\begin{equation*}
\nu ^{\prime }:\mathcal{B}_{E}\ni A\mapsto [ k\mapsto (\nu ^{\prime
}k)(A)]
\end{equation*}
is called the $\tau $-derivative of $\nu $ over the subspace $H$.
If, for any $k\in H$, there exists a $\tau $-logarithmic
derivative measure $\nu $ along $k$, then the mapping $H\ni
k\mapsto \beta _{\nu }(k,\cdot )$ is linear; it is called the
$\tau $-\textit{logarithmic derivative of} $\nu $ over (or along)
$H$ and denoted by $\beta _{\nu }$.

\begin{remark}
If the measure $\nu $ has a logarithmic derivative over a subspace $E$ and $%
h(x)\in H$ for all $x\in E$, then, contrary to what one might expect,
\begin{equation*}
\beta _{S}^{\nu }(x)\neq \beta _{\nu }(h(x),x)
\end{equation*}
in the general case (see below).
\end{remark}

\begin{remark}
If $\tau $ is the topology of convergence on all sets, then any measure that is
$\tau $-differentiable along $S$ will have a logarithmic derivative along $\tau $
(see [12]), however this may not ne the case for weaker topologies. An example is
as follows. In the case where the LCS $E$ is also a Radon space\footnote{%
A topological space $E$ is called a Radon space if any countably additive
Borel measure $\nu $ on $E$ is Radon; this means that, for any Borel subset $%
A$ of $E$ and any $\epsilon >0$, there exists a compact set $K\subset A$
such that $\nu (A\ K)<\epsilon $. If $E$ is a completely regular Radon
space, then the space of all bounded continuous functions on $E$ is in
natural duality with $\mathfrak{M}_{E}$.}, let $S$ be the space of bounded
continuous functions on $E$, and let $\tau _{C}$ be the weak topology on $%
\mathfrak{M}_{E}$ determined by the duality between $C$ and $\mathfrak{M}_{E}
$. Then a measure $\tau _{C}$-differentiable along $S$ may have no
logarithmic derivative along $S$ (even in the case $E=\mathbb{R}^{1}$).
\end{remark}

\begin{definition}
Let $C$ be a norm-defining vector space of $H$-smooth functions on $E$
bounded together with all derivatives. A measure $\nu $ is said to be $C$%
-differentiable along a vector field $h\in Vect(E)$ if there exists a
measure $ \nu_{h}^{\prime }$ such that, for any $\varphi \in C$, we have
the integration by parts formula
\begin{equation*}
\int \varphi ^{\prime }(x)(x)\nu (dx)=-\int \varphi (x)\left( \nu
_{h}^{\prime }\right) (dx).
\end{equation*}
The Radon-Nikodym density of $\nu _{h}^{\prime }$ with respect to $\nu $ (if
it exists) is called the $C$-logarithmic derivative of the measure $\nu $
along $h$; if $h(x)=h_{0}\in E$ for all $x\in E$, then, as above, the $C$%
-logarithmic derivative of $\nu $ along $h$ is called the $C$-logarithmic
derivative of $\nu $ along $h_{0}$
\end{definition}

Note that we denote $C$-logarithmic derivatives by the same symbols as $\tau
$-logarithmic derivatives introduced above as there should be no confusion.

Suppose that a vector field $h_S $ is determined by $h_S(x): = S^{\prime}(0)x
$. Then the following proposition is valid.

\begin{proposition}
A measure $\nu$ is $\tau_C$-differentiable along $S$ if and only if it is $C$%
-differentiable along $h_ S$. In this case, $\beta_{h_S}^\nu = \beta_S^\nu$,
where $\beta_{h_S}^\nu$ is the $C$-logarithmic derivative of $\nu$ along $h_S
$ and $\beta_S^\nu$ is the $\tau_C$-logarithmic derivative of $\nu$ along $S$%
.
\end{proposition}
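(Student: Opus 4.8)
The plan is to prove an equivalence between two notions of differentiability of a measure $\nu$: $\tau_C$-differentiability along the family $S$, and $C$-differentiability along the associated vector field $h_S(x) = S'(0)x$. Both conditions ultimately concern how integrals $\int \varphi \, d(S(t)_*\nu)$ vary as $t$ moves away from $0$, so the natural strategy is to compute the $t$-derivative at $t=0$ of the function $t \mapsto \int \varphi \, d(S(t)_*\nu)$ for an arbitrary test function $\varphi \in C$, and show that this derivative exists precisely when the integration-by-parts measure $\nu'_{h_S}$ exists, and that the two produce the same logarithmic density.

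Let me write out the main computation I would carry out. By the change-of-variables (image measure) formula, for $\varphi \in C$ we have $\int \varphi \, d(S(t)_*\nu) = \int \varphi(S(t)x)\, \nu(dx)$. Since $C$ consists of $H$-smooth functions and $S(0) = \mathrm{id}$, I would differentiate under the integral sign at $t=0$ to obtain
\begin{equation*}
\frac{d}{dt}\bigg|_{t=0} \int \varphi(S(t)x)\, \nu(dx) = \int \varphi'(x)\,\big(S'(0)x\big)\, \nu(dx) = \int \big(\varphi' h_S\big)(x)\, \nu(dx),
\end{equation*}
using the definition of the derivative along a vector field from Definition 3 with $h = h_S$. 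The left-hand side, by the definition of $\tau_C$-differentiability, equals $\int \varphi \, d\nu'_S$ (pairing $C$ against $\mathfrak{M}_E$ via the duality defining $\tau_C$). The right-hand side, by the integration-by-parts formula defining $C$-differentiability, equals $-\int \varphi \, d\nu'_{h_S}$ whenever $\nu'_{h_S}$ exists. Thus, for every $\varphi \in C$, we get $\int \varphi \, d\nu'_S = \int \varphi \, d\nu'_{h_S}$, and since $C$ is norm-defining (Definition 1), a measure is determined by its pairing against $C$, so $\nu'_S = \nu'_{h_S}$ as measures. Taking Radon--Nikodym densities with respect to $f(0) = \nu$ then yields $\beta_S^\nu = \beta_{h_S}^\nu$, and the existence of one density is equivalent to the existence of the other.

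For the biconditional, I would argue both directions from this identity: if $\nu$ is $\tau_C$-differentiable along $S$, then $f'(0) = \nu'_S$ exists as a weak limit in $(\mathfrak{M}_E, \tau_C)$, and the computation above exhibits the measure $\nu'_{h_S} := -\nu'_S$ as satisfying the integration-by-parts identity for every $\varphi \in C$, giving $C$-differentiability along $h_S$; conversely, if $\nu'_{h_S}$ exists, the same identity shows the difference quotients of $f$ converge in $\tau_C$ to $-\nu'_{h_S}$, giving $\tau_C$-differentiability along $S$. The norm-defining property of $C$ is what makes each direction reversible, since it guarantees that weak convergence of the difference quotients against all of $C$ genuinely pins down a limiting measure.

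The main obstacle is rigorously justifying the interchange of differentiation and integration in the display above. This requires controlling the remainder in the first-order Taylor expansion $\varphi(S(t)x) - \varphi(x) - t\,\varphi'(x)(S'(0)x)$ uniformly enough in $x$ to pass to the limit under $\nu$; the $H$-smoothness hypothesis on $C$ (Definition 2), which provides continuity of $\varphi$ and its derivatives together with boundedness, is precisely the analytic input designed to make this rigorous, together with an assumption that $t \mapsto S(t)x$ is differentiable at $t=0$ with derivative $S'(0)x = h_S(x)$ in a suitably uniform sense. In keeping with the paper's stated aim of concentrating on the algebraic structure and leaving aside most analytical assumptions, I would isolate this uniform-convergence estimate as the one analytic point requiring care and treat the rest of the argument as the formal identity of the two logarithmic derivatives.
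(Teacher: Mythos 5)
Your argument is essentially the paper's own proof: both rest on the change-of-variables identity $\int \varphi\, d(S(t)_*\nu)=\int\varphi(S(t)x)\,\nu(dx)$, differentiation of the difference quotient at $t=0$ to produce $\int \varphi' h_S\, d\nu$, and the norm-defining property of $C$ to identify the resulting measures and their densities; you merely spell out the duality step and the interchange of limit and integral that the paper leaves implicit. One small point to tidy: after writing the right-hand side as $-\int\varphi\,d\nu'_{h_S}$ you state the conclusion as $\int\varphi\,d\nu'_S=\int\varphi\,d\nu'_{h_S}$ but later set $\nu'_{h_S}:=-\nu'_S$, so fix the sign convention once (it is absorbed by the paper's convention $S(t)x=x-th(x)$) and carry it through consistently.
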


\begin{proof}
This follows from the change of variable formula. Suppose that $\varphi \in C
$ and, as above, $f(t) : = (S(t))_* \nu$. Then
\begin{eqnarray*}
&& \lim_{t\to 0} t^{-1}\int_E \varphi(x)(f(t))(dx) -
\int_E\varphi(x)(f(0))(dx) \\
&=& \lim_{t\to 0} t^{-1}\int_E \varphi(x) ((S(t))_* \nu)(dx) -
\int_E\varphi(x)\nu(dx) \\
&=& \lim_{t\to 0}t^{-1}\int_E \left(\varphi(S(t)) - \varphi(x)\right)
\nu(dx) = \int_E\varphi^{\prime}(x) _S(x) \nu(dx),
\end{eqnarray*}
which gives the stated result.
\end{proof}

\begin{corollary}
\label{cor:No1} Let $S_{1}$ be another mapping of the interval $(-\epsilon
,\epsilon )$ to $\mathcal{B}_{E}$ with the same properties as $S$. If $%
h_{S}=h_{S_{1}}$, then the measure $\nu $ is $\tau _{C}$-differentiable
along $S$ if and only if it is $\tau _{C}$-differentiable along $S_{1}$.
\end{corollary}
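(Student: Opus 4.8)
The plan is to reduce Corollary 3.1 to the Proposition that immediately precedes it by observing that both notions of differentiability—$\tau_C$-differentiability along $S$ and along $S_1$—are each equivalent, via the Proposition, to a single intermediate notion that depends only on the common vector field.

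\begin{proof}
By the Proposition, the measure $\nu$ is $\tau_C$-differentiable along $S$ if and only if it is $C$-differentiable along the vector field $h_S$, where $h_S(x) := S'(0)x$. Applying the Proposition a second time to the mapping $S_1$, the measure $\nu$ is $\tau_C$-differentiable along $S_1$ if and only if it is $C$-differentiable along the vector field $h_{S_1}$, where $h_{S_1}(x) := S_1'(0)x$. The hypothesis is precisely that these two vector fields coincide, $h_S = h_{S_1}$. Consequently, $C$-differentiability along $h_S$ and $C$-differentiability along $h_{S_1}$ are the very same condition on $\nu$, since the $C$-differentiability of a measure along a vector field (Definition 3.1) is a property of that vector field alone. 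Chaining the two equivalences through this common condition yields that $\nu$ is $\tau_C$-differentiable along $S$ if and only if it is $\tau_C$-differentiable along $S_1$.
\end{proof}

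The key step is simply to notice that the Proposition furnishes a characterization of $\tau_C$-differentiability along a curve $S$ entirely in terms of the first-order data $h_S = S'(0)\,\cdot\,$, so that any two curves sharing the same derivative at $t=0$ are interchangeable for this purpose. I do not anticipate a genuine obstacle here, as the Corollary is a direct logical consequence of the Proposition; the only point requiring mild care is to confirm that $S_1$ is assumed to satisfy all the standing hypotheses placed on $S$ (namely $S_1(0) = \mathrm{id}$ and measurability of each $S_1(t)$), which is explicitly granted in the statement, so that the Proposition does indeed apply verbatim to $S_1$. One might also remark, for completeness, that under the equivalence the logarithmic derivatives agree, $\beta_S^\nu = \beta_{h_S}^\nu = \beta_{h_{S_1}}^\nu = \beta_{S_1}^\nu$, whenever these exist, though the Corollary as stated asserts only the equivalence of differentiability.
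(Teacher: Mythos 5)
Your proof is correct and follows exactly the route the paper intends: the Corollary is stated as an immediate consequence of the preceding Proposition, and chaining the two equivalences through the common condition of $C$-differentiability along $h_S = h_{S_1}$ is precisely that argument. No further comment is needed.
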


\begin{remark}
It is natural to say that the measure $\nu$ is invariant with respect to $S$
if $\beta_S^\nu=0$.
\end{remark}

\begin{theorem}
\label{thm:No1} Suppose that a measure $\nu \in \mathfrak{M}_{E}$ has a $%
\tau _{C}$-logarithmic derivative over a subspace $H$, and let $h$ be a
vector field on $E$ taking values in $H$ . Then
\begin{equation*}
\beta _{h}^{\nu }(x)=\beta ^{\nu }\left( h(x),x\right) +\mathrm{tr}\, h^{\prime
}(x),
\end{equation*}
where $h^{\prime }$ is the derivative of the mapping $h$ over the subspace $H
$.
\end{theorem}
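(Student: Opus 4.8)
The plan is to reduce the vector-field statement to the constant-direction case already encoded in $\beta^{\nu}(k,\cdot)$, via a Leibniz-type argument along an orthonormal basis of $H$; this is precisely the mechanism by which the trace term appears as an infinite-dimensional divergence. Unwinding Definition of the $C$-logarithmic derivative, what must be shown is that for every test function $\varphi \in C$,
\[
\int_E \varphi'(x)\,h(x)\,\nu(dx)=-\int_E \varphi(x)\left[\beta^{\nu}(h(x),x)+\mathrm{tr}\,h'(x)\right]\nu(dx),
\]
since $\beta^{\nu}_{h}$ is characterized as the Radon--Nikodym density making $\int \varphi' h\,d\nu=-\int \varphi\,\beta^{\nu}_{h}\,d\nu$ hold for all such $\varphi$.

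First I would fix an orthonormal basis $\{e_i\}$ of $H$ and expand $h(x)=\sum_i h_i(x)\,e_i$ with scalar components $h_i(x)=\langle h(x),e_i\rangle_H$, so that $\varphi'(x)h(x)=\sum_i h_i(x)\,(\partial_{e_i}\varphi)(x)$. The key step is the termwise product rule along the constant direction $e_i$, namely $h_i\,(\partial_{e_i}\varphi)=\partial_{e_i}(h_i\varphi)-(\partial_{e_i}h_i)\,\varphi$. Integrating each piece against $\nu$ and applying the integration-by-parts formula defining $\beta^{\nu}(e_i,\cdot)$ to the test function $h_i\varphi$ converts $\int_E \partial_{e_i}(h_i\varphi)\,d\nu$ into $-\int_E h_i\varphi\,\beta^{\nu}(e_i,x)\,d\nu$, yielding for each $i$ the identity $\int_E h_i\,\partial_{e_i}\varphi\,d\nu=-\int_E h_i\varphi\,\beta^{\nu}(e_i,x)\,d\nu-\int_E (\partial_{e_i}h_i)\varphi\,d\nu$.

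Summing over $i$ then assembles the two terms of the claim: by linearity of the logarithmic derivative in its first (direction) argument, established in the discussion preceding the theorem, $\sum_i h_i(x)\,\beta^{\nu}(e_i,x)=\beta^{\nu}(h(x),x)$, while $\sum_i(\partial_{e_i}h_i)(x)=\sum_i\langle h'(x)e_i,e_i\rangle_H$ is exactly $\mathrm{tr}\,h'(x)$. Comparing with the defining relation for $\beta^{\nu}_{h}$ gives the result. I expect the main obstacle to be analytic rather than algebraic: justifying the interchange of the infinite sum with the integral, securing the existence of $\mathrm{tr}\,h'(x)$ as a convergent (essentially trace-class) object, and checking that each $h_i\varphi$ is an admissible element of $C$ so that the constant-direction integration-by-parts formula applies. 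In keeping with the stated aim of emphasizing the algebraic structure while setting aside assumptions of analytical character, I would perform the computation formally and note that these convergence and admissibility conditions are precisely the hypotheses under which the right-hand side is well defined.
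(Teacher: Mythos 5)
Your proof is correct at the formal level at which the paper operates, and it establishes exactly the identity the paper's proof targets, namely the integration-by-parts characterization
\begin{equation*}
\int_E \varphi'(x)\,h(x)\,\nu(dx)=-\int_E \varphi(x)\left[\beta^{\nu}(h(x),x)+\mathrm{tr}\,h'(x)\right]\nu(dx),
\end{equation*}
but it gets there by a different decomposition. You coordinatize: expand $h$ along an orthonormal basis of $H$, apply the scalar product rule $h_i\,\partial_{e_i}\varphi=\partial_{e_i}(h_i\varphi)-(\partial_{e_i}h_i)\varphi$ termwise, invoke the constant-direction integration-by-parts formula with test function $h_i\varphi$, and resum using linearity of $\beta^{\nu}$ in its direction argument and $\sum_i\langle h'(x)e_i,e_i\rangle=\mathrm{tr}\,h'(x)$. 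The paper instead works coordinate-free: it introduces a vector-valued measure $\mu$ whose derivative is the Leibniz expansion $h'\varphi\,\nu+\varphi\,\nu'\otimes h+\varphi'\otimes h\,\nu$, takes traces of the operator-valued summands, and uses $\int_E\mu'(dx)=0$ to extract the same identity in one stroke. The two routes are the coordinatized and invariant faces of the same Leibniz-rule mechanism; yours makes the origin of the trace term as an infinite-dimensional divergence $\sum_i\partial_{e_i}h_i$ completely transparent, at the cost of importing the basis-dependent analytic burdens you correctly flag (interchange of sum and integral, trace-class convergence, admissibility of $h_i\varphi$ as a test function), whereas the paper's formulation buries those same issues inside the assertion that $\mu'$ exists as a measure with a well-defined trace. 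One point to tighten: the definition of the $C$-logarithmic derivative requires the integration-by-parts identity for $\varphi\in C$, and $h_i\varphi$ need not lie in $C$ (it need not even be bounded unless $h$ is), so your appeal to the constant-direction formula for the test function $h_i\varphi$ is an extension of the definition rather than an application of it; this is exactly the kind of analytic hypothesis the paper declares it is setting aside, but it deserves explicit mention as an assumption rather than a routine check.
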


\begin{proof}
Suppose that $\varphi \in C$ and $h$ is a vector field on $E$, and let $\mu $
be the $E$-valued measure defined by
\begin{equation*}
\mu ^{\prime }=h^{\prime }(\cdot )\varphi (\cdot )\nu +\varphi (\cdot )\nu
^{\prime }\otimes h(\cdot )+\varphi ^{\prime }(\cdot )\otimes h(\cdot )\nu .
\end{equation*}
Applying the Leibniz rule to the derivative of $\mu $ over the subspace $H$,
we obtain
\begin{equation*}
\mathrm{tr}\mu ^{\prime }=\left( \varphi (\cdot )\mathrm{tr}h^{\prime }(\cdot
)\right) \nu +\varphi (\cdot )\beta ^{\nu }(h(\cdot ),\cdot )\nu +\varphi
^{\prime }(\cdot )h(\cdot )\nu .
\end{equation*}
Each summand in this relation is a measure whose values are operators on $H$%
; calculating the traces of these operators, we obtain
\begin{equation*}
\mathrm{tr}\mu ^{\prime }=\left( \varphi (\cdot )\mathrm{tr}h^{\prime }(\cdot
)\right) \nu +\varphi (\cdot )\beta ^{\nu }(h(\cdot ),\cdot )\nu +\varphi
^{\prime }(\cdot )h(\cdot )\nu .
\end{equation*}
Since $\int_{E}\mu ^{\prime }(dx)=0$ and, therefore, $\int_{E}\mathrm{tr}\mu
^{\prime }(dx)=0$, it follows that
\begin{equation*}
\int_{E}\varphi ^{\prime }(x)h(x)\nu (dx)=-\int_{E}\varphi (x)\big[
\beta ^{\nu }(h(x),x)+\mathrm{tr}h^{\prime }(x)\big]\nu (dx).
\end{equation*}

This means that the required relation holds.
\end{proof}

Both the definitions given above and the algebraic parts of proofs can be
extended to distributions (in the Sobolev-Schwartz sense) defined as
continuous linear functionals on appropriate spaces of test functions. The
difference is that the integrals of functions with respect to measures
should be replaced by values of these linear functionals at functions, and
instead of the change of variables formula for integrals, the definition of
the transformation of a distribution generated by a transformation of the
space on which the test functions are defined should be used.

\section{Quantum Anomalies}

In fact, quantum anomalies arise because the second term in the
relation of Theorem \ref{thm:No1} proved above is the same for all
Feynman (pseudo)measures. Indeed, by virtue of Leibniz rule, the
logarithmic derivative (both over a subspace and along a vector
field) of the product of a function and a measure is the sum of
the logarithmic derivatives of the factors; therefore, a measure
$\nu $ whose logarithmic derivative along a vector field is given
by the expression in Theorem \ref{thm:No1} can formally be taken
for the product of a function $\psi _{n}u$ whose logarithmic
derivative over the subspace $H$ coincides with the logarithmic
derivative of the measure $\nu $ over this subspace and a measure
$\eta $ whose logarithmic derivative over the same subspace
vanishes. If $E$ is finite-dimensional and $H$ coincides with $E$,
then such a function and a measure indeed exist; moreover, $\eta $
turns out to be the Lebesgue measure, and $\psi _{\nu }$ is the
density of $\nu $ with respect to it.

But in the infinite-dimensional case, there exist no exact counterpart of
the Lebesgue measure; nevertheless, an analogue of density, called the
generalized density of a measure, does exist [3, 10, 12], although its
properties are far from those of usual density, and the corresponding
distribution can be regarded as an analogue of the Lebesgue measure. It is
this distribution that should be considered as a formalization of the term -
volume element - used in [2, p. 362]. We however emphasize that the contents
of this paper depends on the properties of neither this distribution nor the
generalized density.

Let $Q$ be a finite-dimensional vector space being the configuration space
of a Lagrangian system with Lagrange function $L:Q\times Q\mapsto \mathbb{R}$
defined by $L(q_{1},q_{2}):=\eta (q_{1},q_{2})+b(q_{2})$, where $b$ is a
quadratic functional (the kinetic energy of the system). We assume that the
Lagrange function $L$ is nondegenerate (hyperregular), i.e., the
corresponding Legendre transform is a diffeomorphism, so that it
determines a Hamiltonian system with Hamiltonian function $\mathcal{H}%
:Q\times P\mapsto \mathbb{R}$, where $P=Q^{\ast }$.

For $t>0$, by $E_{t}$ we denote the set of continuous functions on $[0,t]$
taking values in $Q$ and vanishing at zero and by $H_{t}$, the Hilbert
subspace of $E_{t}$ consisting of absolutely continuous functions on $[0,t]$
with square integrable derivative; the Hilbert norm on $E_{t}$ is defined by
\begin{equation*}
\Vert f\Vert _{H_{t}}^{2}:=\int_{0}^{t}\Vert f^{\prime }(\tau )\Vert
_{Q}^{2}d\tau
\end{equation*}
where $f\in E_{t}$ and $\Vert \cdot \Vert _{Q}$ is the Euclidean norm on $Q$%
. Finally, by $\mathscr{S}(t)(f)$ we denote the classical action defined as
the functional on $H_{t}$ determined by the Lagrangian function $L$
according to
\begin{equation*}
\mathscr{S}(t)(f):=\int_{0}^{t}L(f(\tau ),\dot{f}(\tau ))d\tau .
\end{equation*}

The Schr\"{o}dinger quantization of the Hamiltonian system generated by the
Lagrangian system described above yields the Schr\"{o}dinger equation $%
\mathrm{i}\dot{\psi}(t)=\widehat{\mathcal{H}}\psi (t)$, where $\widehat{%
\mathcal{H}}$\ is a self-adjoint extension of a pseudo-differential operator
on $\mathcal{L}_{2}(Q)$ with symbol equal to the Hamiltonian function $H$
generated by the Lagrange function $L$ . The solution of the Cauchy problem
for this equation with initial condition $f_{0}$ is
\begin{equation}
\psi (t)(q)=\int_{E_{t}}e^{\mathrm{i}\int_{0}^{t}\eta (\psi (\tau )+q,\dot{%
\psi}(\tau ))d\tau }f_{0}(\psi (t)+q)\phi _{t}(d\psi ),
\label{feynman_integral}
\end{equation}
where $\phi _{t}$ is the Feynman pseudo-measure on $E_{t}$ (the exponential
under the integral sign is well defined on the space $H_{t}$).

Let $W_{t}$ be the pseudo-measure on $E_{t}$ defined as the product of the
exponential in the above integral and the psuedo-measure $\phi _{t}$. The
following theorem is valid.

\begin{theorem}
\label{thm:No2} The logarithmic derivative of the pseudo-measure $W_{t}$
along $H_{t}$ exists and is determined by
\begin{equation*}
\beta ^{W_{t}}(k,\psi )=\mathrm{i}\int_{0}^{t}\left[L_1^{\prime}
(\psi
(\tau )+q,\dot{\psi}(\tau )) \,k(\tau )+  L_2^{\prime}(\psi (\tau )+q,\dot{\psi%
}(\tau )) \, \dot{k}(\tau )\right] d\tau
\end{equation*}
where  $k\in H_{t}$, $\psi \in E_{t}$.
\end{theorem}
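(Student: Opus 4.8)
The plan is to exploit the multiplicative structure of $W_t$ together with the additivity of the logarithmic derivative under products of a function and a (pseudo)measure, recorded at the beginning of Section~3. I would write $W_t = g\,\phi_t$, where
\begin{equation*}
g(\psi):=\exp\!\left(\mathrm{i}\int_0^t\eta(\psi(\tau)+q,\dot\psi(\tau))\,d\tau\right)
\end{equation*}
is the exponential factor appearing in \eqref{feynman_integral} and $\phi_t$ is the Feynman pseudo-measure. By the Leibniz rule for logarithmic derivatives it then suffices to compute the two contributions separately and add them,
\begin{equation*}
\beta^{W_t}(k,\psi)=\beta^{g}(k,\psi)+\beta^{\phi_t}(k,\psi),\qquad k\in H_t .
\end{equation*}

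First I would treat the function factor $g$. Since the logarithmic derivative of a smooth nonvanishing function along a direction $k$ is the directional derivative of $\log g$, I would differentiate the exponent under the integral sign along $k\in H_t$, using that $q$ is a fixed constant so that only $\dot\psi$ is perturbed in the velocity slot. Writing $\eta_1',\eta_2'$ for the two partial derivatives of $\eta$ evaluated at $(\psi(\tau)+q,\dot\psi(\tau))$, this gives
\begin{equation*}
\beta^{g}(k,\psi)=\frac{d}{ds}\Big|_{s=0}\log g(\psi+sk)=\mathrm{i}\int_0^t\big[\eta_1'\,k(\tau)+\eta_2'\,\dot k(\tau)\big]\,d\tau .
\end{equation*}
The $H_t$-smoothness of $g$ is exactly what makes the exponent finite on $H_t$, as observed after \eqref{feynman_integral}.

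Next I would compute the logarithmic derivative of the Feynman pseudo-measure $\phi_t$ itself, using that it is the Gaussian-type pseudo-measure attached to the quadratic kinetic term $b$: formally $\phi_t=\exp(\mathrm{i}\int_0^t b(\dot\psi(\tau))\,d\tau)\,dL$, where $dL$ denotes the translation-invariant (Lebesgue) pseudo-measure, whose logarithmic derivative vanishes. Applying the product rule a second time reduces the computation to differentiating the quadratic action, so that, with $b'$ evaluated at $\dot\psi(\tau)$,
\begin{equation*}
\beta^{\phi_t}(k,\psi)=\mathrm{i}\int_0^t b'\,\dot k(\tau)\,d\tau ,
\end{equation*}
the function $b$ depending on the velocity alone. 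Adding the two contributions and using $L=\eta+b$, which gives $L_1'=\eta_1'$ and $L_2'=\eta_2'+b'$, collapses the integrand into the single bracket $L_1'\,k+L_2'\,\dot k$ of the statement.

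The main obstacle is making this last step rigorous: $\phi_t$ is a pseudo-measure, not a countably additive measure, so neither the factorization $W_t=g\,\phi_t$ nor the Leibniz rule is literally an identity of measures. I would therefore carry out the whole argument in the distributional framework indicated at the end of Section~2, replacing integrals against measures by the values of the corresponding continuous linear functionals on $H_t$-smooth test functions and justifying the differentiation under the pseudo-integral directly from the definition of $\tau_C$-differentiability. The analytic heart of the matter is verifying that $\phi_t$ genuinely admits the stated logarithmic derivative over $H_t$---that is, that this imaginary-Gaussian object integrates by parts like a true Gaussian whose covariance is governed by the $H_t$-inner product encoded in $b$; once this is in place, everything else is the formal algebra of the product rule for logarithmic derivatives.
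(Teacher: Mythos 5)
The paper states Theorem~\ref{thm:No2} without any proof (``The following theorem is valid''), so there is no argument of the authors' to compare against line by line; but your strategy is precisely the one the paper sketches informally at the start of Section~3 and in the later remark on generalized densities: factor $W_t=g\,\phi_t$ with $g$ the exponential of the $\eta$-part of the action, apply the Leibniz rule for logarithmic derivatives, and treat $\phi_t$ as the generalized density $\exp(\mathrm{i}\int_0^t b(\dot\psi)\,d\tau)$ times a translation-invariant distribution whose logarithmic derivative along constant vectors vanishes. Your bookkeeping is correct: since $L(q_1,q_2)=\eta(q_1,q_2)+b(q_2)$, one has $L_1'=\eta_1'$ and $L_2'=\eta_2'+b'$, and the two contributions assemble into $\mathrm{i}\int_0^t[L_1'k+L_2'\dot k]\,d\tau$, i.e.\ $\mathrm{i}$ times the directional derivative of the full action along $k$; note also that, correctly, no trace term appears here because $k$ is a constant vector, the trace entering only in the Corollary for vector fields. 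The single genuinely analytic ingredient you flag but do not establish---that the oscillatory pseudo-measure $\phi_t$ really admits the integration-by-parts identity $\beta^{\phi_t}(k,\psi)=\mathrm{i}\int_0^t b'(\dot\psi(\tau))\dot k(\tau)\,d\tau$ over the Cameron--Martin space $H_t$---is likewise left unproved by the paper, which announces that it concentrates on the algebraic structure and leaves aside assumptions of analytical character; so your proposal is at least as complete as the source, and a fully rigorous version would need that one fact supplied from the theory of Feynman pseudo-measures (e.g.\ [7, 8, 12]).
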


\begin{corollary}
If $h$ is a vector field on $E_ t$ taking values in $H_ t $, then the
logarithmic derivative of the pseudo-measure $W_ t$ along $h$ is determined
by
\begin{align*}
\beta_h^{W_t}(\psi) = &\mathrm{i}\int_0^t \left[L_1^{\prime}\left(h(\psi)(%
\tau) + q, \frac{d h(\psi)}{d\tau}(\tau) \right)h(\psi)(\tau) \right. \\
& \quad + \left. L_2^{\prime}\left(h(\psi)(\tau) + q, \frac{d h(\psi)}{d\tau}%
(\tau)\right) \frac{d h(\psi)}{d\tau}(\tau) \right]d\tau + \mathrm{tr} \,
h^{\prime}(\psi).
\end{align*}
\end{corollary}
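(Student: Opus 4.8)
The plan is to obtain this as a direct specialization of the general trace formula, feeding the explicit subspace logarithmic derivative of $W_t$ into the vector-field formula. The hypothesis is precisely that $h$ is a vector field on $E_t$ taking values in $H_t$, so we are exactly in the situation of Theorem~\ref{thm:No1} with $\nu = W_t$ and the distinguished subspace taken to be $H_t$. First I would record that the relevant logarithmic derivative over $H_t$ exists and has the form supplied by Theorem~\ref{thm:No2}: for a \emph{fixed} direction $k \in H_t$ and base point $\psi \in E_t$,
\[
\beta^{W_t}(k,\psi) = \mathrm{i}\int_0^t\left[L_1^{\prime}(\psi(\tau)+q,\dot\psi(\tau))\,k(\tau) + L_2^{\prime}(\psi(\tau)+q,\dot\psi(\tau))\,\dot k(\tau)\right]d\tau .
\]

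The second step is to apply Theorem~\ref{thm:No1} to $\nu = W_t$ and the vector field $h$, which yields $\beta_h^{W_t}(\psi) = \beta^{W_t}(h(\psi),\psi) + \mathrm{tr}\, h^{\prime}(\psi)$. It then remains only to evaluate the first summand by substituting the point-dependent direction $k = h(\psi)$ into the formula of Theorem~\ref{thm:No2}, noting that the time-derivative of the curve $\tau \mapsto h(\psi)(\tau)$ is $\frac{d h(\psi)}{d\tau}(\tau)$, so that $\dot k$ is replaced by $\frac{d h(\psi)}{d\tau}$. Collecting the resulting integral together with the trace term $\mathrm{tr}\, h^{\prime}(\psi)$ produces the claimed expression; no new estimate is needed beyond what Theorems~\ref{thm:No1} and~\ref{thm:No2} already provide.

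The one point that genuinely requires care, and which the preceding Remark explicitly flags, is that one must \emph{not} simply set $k = h(\psi)$ in the subspace logarithmic derivative and stop: in general $\beta_h^{W_t}(\psi) \neq \beta^{W_t}(h(\psi),\psi)$, and the discrepancy is exactly the Jacobian-type term $\mathrm{tr}\, h^{\prime}(\psi)$ that Theorem~\ref{thm:No1} restores. Thus the main obstacle here is conceptual rather than computational: recognizing that the correct route is through the trace formula of Theorem~\ref{thm:No1} rather than a naive direct substitution. A secondary, purely analytic matter to verify is that the hypotheses of Theorem~\ref{thm:No1} are actually met, namely that $h$ being $H_t$-valued makes $h(\psi)$ an admissible direction with $\frac{d h(\psi)}{d\tau}$ square integrable on $[0,t]$, and that $h$ is differentiable over $H_t$ with $h^{\prime}(\psi)$ of trace class so that $\mathrm{tr}\, h^{\prime}(\psi)$ is well defined.
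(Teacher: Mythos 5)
Your argument is exactly the intended one: the paper states this corollary without proof, as an immediate combination of Theorem~\ref{thm:No1} (the trace formula $\beta_h^{\nu}=\beta^{\nu}(h(\cdot),\cdot)+\mathrm{tr}\,h^{\prime}$) with the explicit logarithmic derivative over $H_t$ from Theorem~\ref{thm:No2}, which is precisely your route, including the correct observation that the trace term is what distinguishes $\beta_h^{W_t}$ from a naive substitution. The only point to flag is that a literal substitution of $k=h(\psi)$ into Theorem~\ref{thm:No2} leaves the arguments of $L_1^{\prime}$ and $L_2^{\prime}$ as $(\psi(\tau)+q,\dot{\psi}(\tau))$, whereas the corollary as printed writes $\bigl(h(\psi)(\tau)+q,\tfrac{dh(\psi)}{d\tau}(\tau)\bigr)$ there --- an apparent typo in the paper which your phrase ``produces the claimed expression'' silently glosses over.
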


A similar assertion is valid for the logarithmic derivative along a family $%
S(\alpha )$ of transformations of the space $E_{t}$ depending on a parameter
$\alpha \in (-\epsilon ,\epsilon )$.

It follows from Corollary \ref{cor:No1} that if the classical action $%
\mathscr{S}(t)$ is invariant with respect to a family $S ( \alpha )$, $%
\alpha \in (-\epsilon, \epsilon)$, of transformations of the space
$E_t$, then the logarithmic derivative $\beta_S^W(\psi)$ does not
necessarily vanish. In turn, this means that if a Lagrange
function is invariant with respect to a family of transformations
of the configuration space and hence the action is invariant with
respect to the corresponding gauge transformations then  the
solution of the corresponding Schr\"{o}dinger equation is not
necessarily invariant with respect to the same gauge
transformations.

\begin{remark}
For each family $S(\alpha )$ of transformations of the space $E_{t}$, we can
obtain an explicit expression for the transformations of the psuedo-measure $%
W_{t}$ generated by the transformations $S(\alpha )$ by solving the equation
$\dot{g}(\alpha )=\beta _{S(\alpha )}^{\nu }g(\alpha )$ (see [10]). It
follows [10] that if $\mathrm{tr}\left( h_{S(\alpha )}\right) ^{\prime
}(\psi )>0$, for $\alpha \in \lbrack 0,\alpha _{0}]$, , then $det(S(\alpha
))^{\prime }\neq 1$; this fact does not depend on the classical action.
\end{remark}

\begin{remark}
Using the notion of the generalized density of a pseudo-measure (cf. [3, 10,
11], where only generalized densities of usual measures were considered), we
can say that the pseudo-measure $W_{t}$ is determined by its generalized
density being the exponential in the Feynman integral (\ref{feynman_integral}%
). Moreover, as mentioned above, the expression for the transformations of
the pseudo-measure contain determinants, and the expressions for the
corresponding logarithmic derivatives contain traces, which do not depend on
the generalized densities. This can be interpreted by treating the Feynman
pseudo-measure as the product of its generalized density and distribution
whose transformations are described by the corresponding determinants and
traces, and the logarithmic derivatives of this distribution along constant
vectors vanish. In turn, this allows us to say that the distribution
mentioned above corresponds to the volume element considered in [2].
\end{remark}

\begin{remark}
Thus, the determinants and traces mentioned above cannot be eliminated by
any choice of the integrand and the Feynman pseudo-measure if the
corresponding Feynman integrals are required to represent the solutions of
the corresponding Schr\"{o}dinger equation. Clearly, this contradicts [2, p.
362].
\end{remark}

\begin{remark}
If $E$ is a  superspace, then, instead of traces and determinants,
we should use supertraces and superdeterminants.
\end{remark}

\section*{Acknowledgements}

\textit{T.S. Ratiu acknowledges the partial support of Swiss National Science
Foundation, grant no. Swiss NSF 200021-140238. J. Gough and O.G. Smolyanov
acknowledge the partial support of the Swiss Federal Institute of Technology
in Lausanne (EPFL). O.G. Smolyanov acknowledges the support of the Russian
Foundation for Basic Research, project no. 14-01-00516.}

\end{document}